\newcommand{\ket}[1]{\, |#1\rangle}
\newcommand{\bra}[1]{\langle #1 |\,}
\newtheorem{thm}{Theorem}
\newtheorem{cor}{Corollary}
\begin{document}

\title{Proof of the Orthogonal Measurement Conjecture for Qubit States}
\date{September 1, 2008}
\author{Andreas Keil}
\affiliation {Department of Physics, National University of Singapore, 2 Science Drive 3, Singapore 117542\\
Centre for Quantum Technologies, National University of Singapore, 3 Science Drive 2, Singapore 117543}
\pacs{03.67.Hk}
\keywords{accessible information, POVM, von Neumann measurement}
\begin{abstract}
The accessible information of general signal states is obtained by
performing a generalized measurement. In the case that the signal alphabet consists of two states of a qubit system, 
it is proved that a von Neumann (orthogonal) measurement
is sufficient to reach the accessible information.
\end{abstract}
\maketitle

Mutual information is the fundamental quantity of information that two parties, Alice 
and Bob, share. Assume that Alice encodes a message by sending a specific quantum state
 $\rho_r$ 
for each letter in the alphabet of the message. 
The rth letter in the alphabet occurs with probability ${\rm tr} (\rho_r)$ 
in the message. 
Bob sets up a measurement apparatus to determine which state was sent, described by an $n$-outcome positive operator valued measure (POVM) in a $d$-dimensional
space, the range of $\sum\limits_r \rho_r$,
\begin{equation}
\sum\limits_{j=1}^{n} \Pi_j = {\bf{I}}_{d \times d}, \; \Pi_j \ge 0 \;\; \forall j\le n.
\end{equation}
This gives a joint probability matrix 
\begin{equation}
p_{rj} = {\rm tr} \left(\rho_r \Pi_j \right)
\end{equation}
which denotes the probability of Bob to receive outcome $j$ and Alice to send $l$. 
We denote the marginals of the distribution by 
\begin{equation}
p_{r \cdot} =\sum\limits_j p_{rj},\; \;
p_{\cdot j} =\sum\limits_r p_{rj}.
\end{equation}

The mutual information 
\begin{equation}
I(\rho,\Pi) = \sum\limits_{r,j} p_{rj} \log \left(\frac{p_{rj}}{p_{r \cdot} p_{\cdot j}} \right)
\end{equation}
tells us how much information on average is transmitted to Bob per sent state \cite{Nielsen}. 
Choosing a POVM that maximizes the mutual information defines an important quantity;
this is
called the accessible information
\begin{equation}
 I_{{\rm acc}}=\max\limits_{\Pi} I(\rho,\Pi).
\end{equation}

The question which POVM maximizes the mutual information, which
was raised by Holevo in 1973 
\cite{Holevo},  is generally unanswered and 
usually addressed numerically \cite{Berge,Jun,Somim}. Even the simpler question
of how many outcomes are sufficient is unanswered. 
It has been shown \cite{Holevo2} that 
an orthogonal (von Neumann) measurement, i.e.
\begin{equation}
\Pi_l \Pi_k = \delta_{lk} \Pi_k, \;\; {\rm tr} \{\Pi_k\} =1,
\end{equation}
is in general not sufficient. 
Levitin \cite{Levitin} 
conjectured in 1995 that if Alice's alphabet consists of up to $d$ 
states, an orthogonal measurement 
is sufficient. If so, the number of outcomes would be equal to 
the dimension of the Hilbert space. This conjecture has been shown not to
hold in general by Shor \cite{Shor}. 
In the same paper Shor reported that Fuchs and Perez
affirmed numerically that if the alphabet consists of two states the optimal
measurement is an orthogonal measurement. This is the 
\emph{orthogonal measurement conjecture}. 
For two pure states it was proved to be true in arbitrary dimensions 
by Levitin \cite{Levitin}.

This conjecture has important experimental and theoretical implications.
In an experiment, orthogonal measurements
are generally easier to implement than arbitrary generalized measurements. 
From a theoretical 
point, knowing the accessible information is crucial 
to determine the $C^{1,1}$-channel capacity \cite{Shannon} and
for security analysis
using the Csisz{\'a}r-K\"orner theorem \cite{Csiszar},
for example the thresholds for a collective attack on the Singapore protocol
\cite{Singapore} are obtained by determining the accessible information.
 Work has been
done under the assumption that this conjecture is true \cite{Fuchs}.
In the sequel we will prove this conjecture for qubits. 

Label the two states of the qubit system by $\rho_1$ and $\rho_2$. It is always
possible to find a basis in which their matrix representation is real.
 Hence, we can conclude from Sasaki \emph{et al.} \cite{Sasaki} in an extension of the proof by Davies \cite{Davies}  
that the maximum number of rank-1
 outcomes needed to reach the accessible information is three. In the 
same paper Sasaki showed that these outcomes can be chosen to be real as well. The outcomes can
be written $\Pi_j = \ket{j} \bra{j}$ where in general $\bra{j} j\rangle < 1$.
What we will show in the following is that two of these vectors have
to be linearly dependent if the mutual information is maximal. 
The most general variation of these outcomes is given by
\begin{equation}
 \delta \ket{j} = i \sum\limits_m \ket{m} \epsilon_{mj}, \;\; \epsilon^*_{mj} = \epsilon_{jm}
\end{equation}
with $\epsilon$ representing an arbitrary hermitian three by three matrix. With this variation, the first order
variation of the mutual information is given by
\begin{eqnarray*}
\delta I &=&\sum\limits_{r,j} \delta p_{rj} \log \frac{p_{rj}}{p_{\cdot j}}\\
&=& -2 \sum\limits_{r,j,m} \mathrm{Im} \left( \bra{j} \rho_r \ket{m} \epsilon_{mj}\right) \log \frac{p_{rj}}{p_{\cdot j}}. 
\end{eqnarray*}
For each pair of outcomes $(k,l)$ we can set $\epsilon_{mj}$ for 
$\{k,l\}\neq\{m,j\}$  to zero, except for $\epsilon_{kl}=\frac{i}{2}$ 
and $\epsilon_{lk}=-\frac{i}{2}$.
This gives us the following set of variations 
\begin{equation}
 \delta_{(k,l)} I = \sum\limits_{r=1}^{2} 
\bra{k} \rho_r \ket{l} \log\left(\frac{p_{rk}}{p_{rl}}\frac{p_{\cdot l}}{p_{\cdot k}} \right)=0.\label{keyeq}
\end{equation}
Since these sets are antisymmetric in $k,l$ we get exactly three independent pairs. Fix 
one of the directions, say $\ket{1}$, and one vector $\ket{0}$ orthonormal
to $\ket{1}$ 
 to complete a basis of our
real Hilbert space. Any vector $\ket{n}$ can be expressed as 
\begin{equation}
 \ket{n}= \beta_0 (n) \ket{0} + \beta_1(n) \ket{1}. \label{basis}
\end{equation}
We want to see what are the restrictions from these equations 
(\ref{keyeq}) on the vectors. 
When $\beta_0$ is zero the conjecture is trivially true, since the vector
would be proportional to $\ket{1}$. 
 Observe that, due to scale invariance of (\ref{keyeq}), it is always possible 
to divide out $\beta_0\neq 0$ and restrict ourselves 
to $\ket{n}=  \ket{0} + t \ket{1}$ with $t$ an arbitrary real number.

In the following we assume neither $\rho_1$ nor $\rho_2$ to be pure. We will
outline how to approach this case at the end of the paper. We get: 
\begin{eqnarray}
0\!=\!\bra{1}\!\left(\!\rho_1\!+\!\rho_2\!\right)\!\ket{1}\!\label{teqn}
\!\sum\limits_{r=1}^{2}\!\alpha_r Q_r^\prime(t)\!\log 
\frac {Q_r(t)}
{\alpha_1\!Q_1(t)\!+\!\alpha_2\! Q_2(t)} 
\end{eqnarray}
with
\begin{equation}
 Q_{r}=t^2 + 2 t \frac{\bra{1}\rho_{r}\ket{0}}
{\bra{1}\rho_{r}\ket{1}} + \frac{\bra{0}\rho_{r}\ket{0}}{\bra{1}
\rho_{r}\ket{1}}, \hspace{0,5em} 
\alpha_r = \frac{\bra{1}\rho_r\ket{1}}{\bra{1}\left(\rho_1 + \rho_2\right) \ket{1}}
\end{equation}
and \emph{prime} denoting differentiation with respect to $t$.
Introducing
\begin{equation}
\xi_{r}={\frac{\bra{1}\rho_{r}\ket{0}}{\bra{1}\rho_{r}\ket{1}}}, \hspace{0,5em}
\eta_{r}=\frac{\bra{0}\rho_{r}\ket{0}}{\bra{1}\rho_{r}\ket{1}},
\end{equation}
the range for these variables is restricted due to positivity of the states to
\begin{equation}
 0 \le \xi_r^2 < \eta_r < \infty, \;\; 0 < \alpha_r < 1, \;\;  r=1,2, \;\;
\alpha_1+\alpha_2=1.  \label{constraints}
\end{equation}

The following theorem is the cornerstone for handling the general case:
\begin{thm}\label{thmzeros}
Each function defined by
\begin{equation}
f_{(\alpha,\xi,\eta)}(t) = \sum\limits_{r=1}^{2} \alpha_r Q_r^\prime(t) \log 
\frac {Q_r(t)}
{\alpha_1 Q_1(t) + \alpha_2 Q_2(t)} \label{function2}
\end{equation}
with constraints given by (\ref{constraints}), 
$Q_r(t)= t^2 + 2t \xi_r + \eta_r$  
and $Q_r^\prime(t) = 2(t+ \xi_r)$, 
has at most two real roots.
\end{thm}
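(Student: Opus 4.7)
The approach is to study $f$ through its antiderivative and its first derivative. Using $(Q_r\log Q_r)'=Q_r'(\log Q_r+1)$ together with $\sum_r\alpha_r Q_r'=P'$ (where $P=\alpha_1 Q_1+\alpha_2 Q_2$), a one-line integration by parts shows that $f=g'$ where
\[g(t)=\alpha_1 Q_1\log Q_1+\alpha_2 Q_2\log Q_2-P\log P.\]
Jensen's inequality applied to the convex function $x\mapsto x\log x$ yields $g\ge 0$, with equality iff $Q_1(t)=Q_2(t)$. Taylor-expanding $\log(Q_r/P)$ at $|t|\to\infty$---using $Q_1-P=\alpha_2 D$ and $Q_2-P=-\alpha_1 D$ with $D:=Q_1-Q_2$ \emph{linear} in $t$ while $P\sim t^2$---shows that $g(t)\to 2\alpha_1\alpha_2(\xi_1-\xi_2)^2$ as $t\to\pm\infty$, and consequently $f(t)\to 0$. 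Rolle's theorem therefore links the real zero count of $f$ directly to that of $f'$: if $f'$ has at most one real zero, then $f$ has at most two.

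A direct differentiation together with the convexity identity $\alpha_1 u^2/A+\alpha_2 v^2/B-(\alpha_1 u+\alpha_2 v)^2/(\alpha_1 A+\alpha_2 B)=\alpha_1\alpha_2(uB-vA)^2/[(\alpha_1 A+\alpha_2 B)AB]$ produces the decomposition
\[f'(t)=2\log\frac{Q_1^{\alpha_1}Q_2^{\alpha_2}}{P}+\frac{\alpha_1\alpha_2\bigl(Q_1'Q_2-Q_2'Q_1\bigr)^2}{P\,Q_1 Q_2}.\]
The first summand is nonpositive (weighted AM--GM) and the second is nonnegative, with numerator the square of the \emph{explicit} quadratic $Q_1'Q_2-Q_2'Q_1=-2[(\xi_1-\xi_2)t^2+(\eta_1-\eta_2)t+\xi_2\eta_1-\xi_1\eta_2]$. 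Zeros of $f'$ are thus precisely the points where these opposing contributions cancel, so the task is reduced to showing that such a cancellation can occur for at most one real $t$.

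The degenerate case $\xi_1=\xi_2$ is easy: $f$ collapses to $2t\bigl[\alpha_1\log((t^2+c_1)/P)+\alpha_2\log((t^2+c_2)/P)\bigr]$ with $c_r=\eta_r-\xi_r^2>0$, and strict AM--GM between $(t^2+c_1)^{\alpha_1}(t^2+c_2)^{\alpha_2}$ and $\alpha_1(t^2+c_1)+\alpha_2(t^2+c_2)$ forces the bracket to be strictly negative when $c_1\neq c_2$, leaving only the root $t=0$. The principal obstacle is the generic case $\xi_1\neq\xi_2$: the transcendental logarithm precludes direct polynomial zero counting, so the proof must combine the structural decomposition of $f'$ with a careful sign/monotonicity argument---likely in the linear variable $s=D(t)=2(\xi_1-\xi_2)t+(\eta_1-\eta_2)$, where both summands of $f'$ acquire a cleaner algebraic form---to reduce the one-zero claim for $f'$ to an algebraic inequality exploiting the strict positivity $\eta_r>\xi_r^2$ of the discriminants.
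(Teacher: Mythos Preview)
Your setup is correct and in fact reproduces exactly the paper's formula for $f'$: the paper writes
\[
f' \;=\; 2\sum_{l}\alpha_l\log\frac{Q_l}{Q_s}\;+\;\alpha_1\alpha_2\,\frac{(Q_1'Q_2-Q_2'Q_1)^2}{Q_sQ_1Q_2},
\]
which is identical to your decomposition (with $P=Q_s$). Your antiderivative $g$, the Jensen bound $g\ge 0$, the asymptotics, and the degenerate case $\xi_1=\xi_2$ are all fine.

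The gap is that you stop precisely where the real work begins. For the generic case $\xi_1\neq\xi_2$ you assert only that ``the proof must combine the structural decomposition of $f'$ with a careful sign/monotonicity argument---likely in the linear variable $s=D(t)$'' and never carry this out. A decomposition of $f'$ into a nonpositive transcendental term and a nonnegative rational term does not by itself bound the number of cancellation points; without an actual monotonicity or convexity statement about one of the two pieces (or about their ratio), the claim that $f'$ has at most one real zero is unsupported. Indeed, the paper never establishes this stronger bound on $f'$: its counting only yields that $f'$ has at most \emph{three} zeros, so it is not even clear that your intended target is true.

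The paper's route is quite different from what you sketch. Rather than analysing $f'$ directly, it differentiates once more: $f''$ is \emph{rational}, and factors as
\[
f''=\alpha_1\alpha_2\,\frac{L}{(Q_1Q_2Q_s)^2}\,P,\qquad L=Q_1'Q_2-Q_2'Q_1,
\]
with $L$ quadratic and $P$ a degree-six polynomial in $t$. The transcendental logarithm has disappeared, and the problem becomes algebraic: bound the real roots of $P$. The paper does this by computing the discriminant $\Delta(P,t)$, factoring it explicitly, and showing every factor is strictly positive on the parameter domain; since the discriminant never vanishes, the real-root count of $P$ is constant, and evaluating $P$ at one convenient parameter point shows it is one. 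Hence $f''$ has at most $2+2=4$ zeros, and then two applications of Rolle \emph{together with} the vanishing of both $f$ and $f'$ at $\pm\infty$ give $f'\le 3$ zeros and finally $f\le 2$ zeros. The key idea you are missing is this passage to $f''$, which trades the transcendental sign analysis you allude to for a concrete (if laborious) discriminant computation.
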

We will prove this later; first, we use this theorem to show our central result:
\begin{thm}
 Every local maximum of the mutual information is a von Neumann measurement.
\end{thm}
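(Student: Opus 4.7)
The plan is a proof by contradiction using Theorem~\ref{thmzeros}. Suppose a local maximum is attained by three rank-1 outcomes $\ket{j_1},\ket{j_2},\ket{j_3}$ with no two proportional; any pair of proportional effects can instead be merged without changing the mutual information, leaving two rank-1 effects whose sum is $\mathbf{I}_{2\times 2}$, which in two dimensions must be orthogonal projectors. Using the real-basis freedom I would align $\ket{j_1}$ with $\ket{1}$ and write $\ket{j_i}=\ket{0}+t_i\ket{1}$ for $i=2,3$ with distinct real $t_2,t_3$. The three independent first-order conditions then come from the pairs $(j_1,j_2),(j_1,j_3),(j_2,j_3)$.

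The first two conditions are exactly (\ref{teqn}) evaluated at the other outcome's parameter, so
\begin{equation*}
f_{(\alpha,\xi,\eta)}(t_2)=f_{(\alpha,\xi,\eta)}(t_3)=0.
\end{equation*}
To extract the content of the $(j_2,j_3)$ condition I would introduce the antiderivative
\begin{equation*}
G(t)=\sum_{r=1}^{2}\alpha_r Q_r(t)\log\frac{Q_r(t)}{\alpha_1 Q_1(t)+\alpha_2 Q_2(t)},
\end{equation*}
observe by a short calculation (the $Q_r L_r'$ terms collapse through $\sum_r\alpha_r Q_r'=(\alpha_1 Q_1+\alpha_2 Q_2)'$) that $G'(t)=f_{(\alpha,\xi,\eta)}(t)$, and exploit the identity
\begin{equation*}
\bra{j_2}\rho_r\ket{j_3}=\bra{1}\rho_r\ket{1}\bigl[Q_r(t_2)-\tfrac12(t_2-t_3)Q_r'(t_2)\bigr]=\bra{1}\rho_r\ket{1}\bigl[Q_r(t_3)+\tfrac12(t_2-t_3)Q_r'(t_3)\bigr].
\end{equation*}
Substituting either form into (\ref{keyeq}) for $(j_2,j_3)$ and using $f(t_2)=f(t_3)=0$ to annihilate the $Q_r'$ contributions, the condition collapses to the scalar equality $G(t_2)=G(t_3)$.

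The contradiction is then immediate. By Theorem~\ref{thmzeros}, $f_{(\alpha,\xi,\eta)}$ has at most two real roots, so with $t_2$ and $t_3$ already identified, $f$ cannot vanish on the open interval between them and, by continuity, keeps a fixed sign there. Since $G'=f$, the function $G$ is strictly monotonic on $[t_2,t_3]$, contradicting $G(t_2)=G(t_3)$. Hence no local maximum can support three mutually non-proportional rank-1 outcomes, and after merging proportional effects the optimum is a von Neumann measurement. The main obstacle I expect is the reformulation of the $(j_2,j_3)$ condition: one has to recognize both the decomposition $\bra{j_2}\rho_r\ket{j_3}=\bra{1}\rho_r\ket{1}[Q_r(t_i)\pm\tfrac12(t_2-t_3)Q_r'(t_i)]$ and the antiderivative relation $G'=f$ so that the three disparate stationarity equations can be synthesized into a single mean-value argument that feeds into Theorem~\ref{thmzeros}.
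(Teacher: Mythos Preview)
Your argument is correct, and it reaches the conclusion by a genuinely different route from the paper. The paper's proof exploits the observation that $f_{(\alpha,\xi,\eta)}$ always possesses a ``trivial'' root $t^\ast$ at which $Q_1(t^\ast)=Q_2(t^\ast)$ and hence both logarithms vanish; since $t_2,t_3,t^\ast$ are all roots and Theorem~\ref{thmzeros} permits at most two, one of $t_2,t_3$ must equal $t^\ast$, making (say) outcomes $1$ and $3$ informationally equivalent. The paper then repeats the whole construction with $\ket{2}$ as the reference direction---which is where the $(2,3)$ stationarity condition is finally used---to conclude that all three outcomes are equivalent, so the stationary point is the zero-information minimum rather than a maximum. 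Your approach instead keeps a single reference direction and feeds the $(2,3)$ condition in directly: the antiderivative identity $G'=f$ together with the bilinear decomposition of $\bra{j_2}\rho_r\ket{j_3}$ collapses it to $G(t_2)=G(t_3)$, which contradicts strict monotonicity of $G$ on $[t_2,t_3]$ once $t_2,t_3$ are known to be the only roots of $f$. Your mean-value argument is cleaner in that it handles all three first-order conditions in one pass, avoids switching base vectors, and sidesteps the implicit case split in the paper's reasoning (the trivial root $t^\ast$ exists only when $\xi_1\neq\xi_2$; when $\xi_1=\xi_2$ one must separately note that $f$ then has a unique root). The paper's route, on the other hand, yields the additional qualitative statement that the only non--von~Neumann stationary point is the global minimum.
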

\begin{proof}
 Assume that the mutual information is stationary and that POVM is not von Neumann. Observe that in (\ref{function2}) if one logarithm is zero, automatically the other
is zero as well. Since $\ket{2}$ and $\ket{3}$ have to be distinct, Theorem~\ref{thmzeros}
tells us that one of these states must set the logarithm to zero, say $\ket{3}$. This means that 
\begin{eqnarray*}
 \frac{p_{\cdot 1} p_{13}}{p_{\cdot 3 } p_{11}} = 1 \; \leftrightarrow \frac{p_{21}}{p_{11}}=
\frac{p_{23}}{p_{13}},
\end{eqnarray*}
so outcome 1 and outcome 3 are equivalent.
Since the same reasoning is applicable to $\bra{2}$ instead of $\bra{1}$ in equations 
(\ref{keyeq}),(\ref{teqn}) we find that all outcomes are equivalent and we are in a minimum. 
\end{proof}
\begin{figure}[htb!]
\centering
\includegraphics[width=9.5cm]{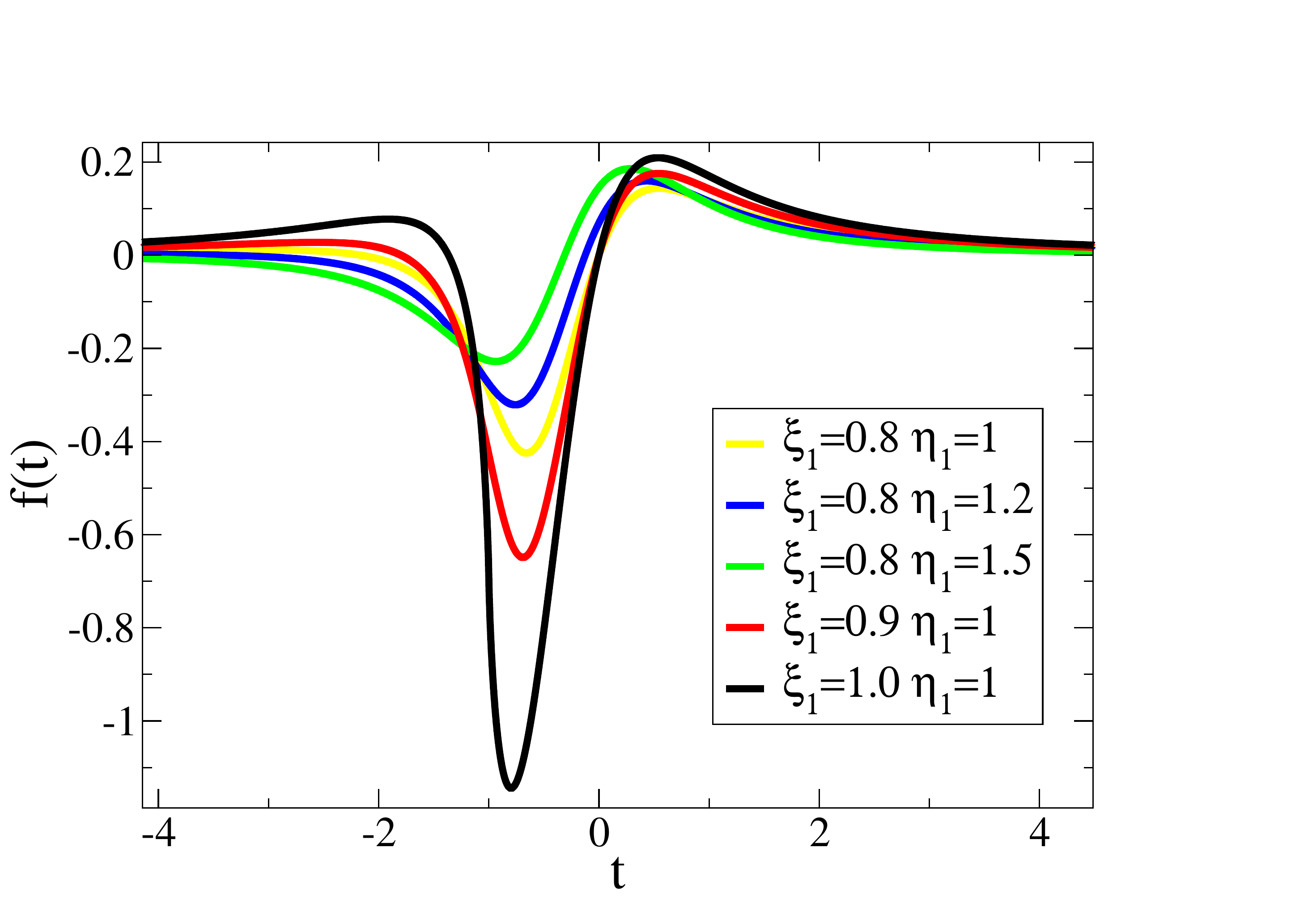}
\caption{Illustrating the function $f(t)$ in Theorem (\ref{thmzeros}) for 
selected values, $\alpha_1~=\alpha_2~=~0.5$, $\xi_2=0$ and $\eta_2=1$, other parameters as stated in the graph.}
\label{figtypical}
\end{figure}

The remainder of this paper will be used to show some lemmata to proof Theorem~\ref{thmzeros}.
The graph of $f$ is plotted in  Fig.\ref{figtypical} for selected values of the parameters  to
illustrate its shape. 
Using Landau little-o notation and expansion of the logarithm in 
(\ref{function2}) shows that asymptotically  
\begin{equation}
 f(t) = 0 + o(1/t).
\end{equation}
In particular, this implies that the function vanishes at infinity. 
Fortunately the term in front of the logarithm is linear, so 
by calculating the second derivative we arrive
at a rational function. Introducing the abbreviations
\begin{equation}
 L=Q_1^\prime Q_2 - Q_2^\prime Q_1, \;\; Q_s=\alpha_1 Q_1 + \alpha_2 Q_2,
\end{equation}
for the first and second derivative we get 
\begin{eqnarray}
f^{\prime} &=& 2 \left(\sum\limits_{l=1}^2 \alpha_l \log \frac{Q_l}{Q_s}\right)+
\alpha_1 \alpha_2 \frac{\left(Q_1^\prime Q_2 - Q_2^\prime Q_1 \right)^2}{Q_s Q_1 Q_2} \nonumber\\
f^{\prime \prime} &=&
\alpha_1 \alpha_2 \frac{L}{(Q_1 Q_2 Q_s)^2} P  \label{secondd}
\end{eqnarray}
with
\begin{equation}
P = 3 L^\prime (Q_1 Q_2 Q_s) - (Q_1 Q_2 Q_s)^\prime L. \label{P}
\end{equation}
The second derivative of $f$ is a product of a rational function with a maximum of two zeros and no poles
-- since $L$ is of second order and $Q_1 Q_2 Q_s$ strictly positive --
 and a sixth-order polynomial $P$. 
 In the following we will show that the number of real roots of $P$ is limited to two, which implies 
that the number of inflection points of $f$ is limited by four. 
Since $f$ approaches zero at infinity, it follows that the
number of real roots of $f$ is at maximum two, completing the proof of \emph{Theorem 1}.

Since the number of roots is translation and scaling invariant, it is always possible to set $\xi_2 =0$ and
$\eta_2=1$. It will be convenient in the following to label $\xi_1$ simply as $\xi$ and to 
introduce the difference between $\eta_1$ and $\xi_1^2$,
\begin{equation}
 X = \eta_1 - \xi ^2 > 0
\end{equation}
which is positive because of (\ref{constraints}).
Since a direct attack on this family of polynomials does not prove successful, we will look at its 
discriminant, showing that it is non-zero which implies that on 
the whole parameter domain no double root occurs.
 Since the coefficients (and therefore the roots) of our set of polynomials depend continuously
on the parameters, any change in the number of roots occurs either by going through a double root or 
changing the degree of the polynomial. In particular we will use the following
\begin{thm}
Let G be a family of real valued polynomials with formal degree $n$, i.e.
\begin{equation*}
 G: \mathbb{R}^m \rightarrow \mathbb{R} [x], z \mapsto G_z = \sum\limits_{k=0}^{n} g(z)_k x^k
\end{equation*}
which is continuous in the coefficients of $x$, and $D$ a connected domain of $G$ for which 
the discriminant of $G$ does not vanish. The number of roots of $G_z$ is constant on each of the following
domains
\begin{eqnarray*}
 D_0 &=& \{ z \in \mathbb{R}^m | g(z)_n = 0 \},\\
 D_1 &=& D \backslash D_0,
\end{eqnarray*}
and the number of roots on $D_1$ is the number of roots on $D_0$ plus one.
\end{thm}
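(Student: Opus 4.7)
The plan is to combine continuous dependence of polynomial roots on the coefficients with the two mechanisms by which the real-root count of a real polynomial can change: a pair of real roots colliding into a complex conjugate pair (or vice versa), which forces the discriminant to vanish, and a root escaping to $\pm\infty$, which forces the leading coefficient to vanish. The first is excluded on all of $D$ by hypothesis. The second is excluded on $D_1$ by definition, and is irrelevant on $D_0$ where the polynomial already has formal degree at most $n-1$. Together these observations should yield local constancy of the real-root count on each region, and a limiting argument bridging the two regions will supply the $+1$ jump.

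More concretely, on $D_1$ the polynomial $G_z$ has exact degree $n$, so its $n$ complex roots are simple (the discriminant does not vanish) and depend continuously on $z$; neither of the two transition mechanisms is available, so the real-root count is locally constant. The same reasoning applied on $D_0$, where $G_z$ has formal degree $n-1$ with simple roots, gives local constancy there. Global constancy in each region then requires connectedness; some care is warranted since stripping the codimension-one locus $D_0$ from a connected $D$ can disconnect $D_1$, so this point deserves a short argument in the full proof (or an implicit assumption on the shape of $D$).

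For the $+1$ shift I would take a point $z_0 \in D_0$ and a path $z_t \in D_1$ with $z_t \to z_0$. Writing $G_{z_t}(x) = g_n(z_t)\,x^n + g_{n-1}(z_t)\,x^{n-1} + \cdots$ with $g_n(z_t) \to 0$ and $g_{n-1}(z_0) \neq 0$, the dominant balance at large $|x|$ places one root near $-g_{n-1}(z_t)/g_n(z_t)$, which is real and diverges to $\pm\infty$, while a Rouch\'e argument on a fixed large disk confines the remaining $n-1$ roots and shows that they converge to the roots of the limit polynomial $G_{z_0}$. Since the escaping root is real, the count satisfies $N_1 = N_0 + 1$.

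The main obstacle is the rigor of this last limiting step: one must verify that nonvanishing of the discriminant on $D$ actually forces $g_{n-1}(z_0) \neq 0$ for every $z_0 \in D_0$, because otherwise a complex conjugate pair of roots could escape simultaneously to infinity and the jump need not be $+1$. A second technical point is the careful interpretation of ``discriminant'' across the boundary $D_0$, where the formal degree drops and the degree-$n$ discriminant formula degenerates, so one effectively has to compare the degree-$n$ discriminant on $D_1$ to the degree-$(n-1)$ discriminant on $D_0$ consistently. Tightening the connectedness observation above is the secondary issue.
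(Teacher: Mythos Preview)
Your approach is essentially the same as the paper's: continuous dependence of roots on coefficients, nonvanishing discriminant ruling out double roots on each region, and the observation that on $D_0$ the degree-$n$ discriminant factors so as to force $g_{n-1}\neq 0$, whence exactly one real root escapes to infinity. The paper's argument is considerably terser and does not address the connectedness and discriminant-degeneration subtleties you flag, but the underlying mechanism is identical.
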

\begin{proof}
 Assume that the discriminant does not vanish. This implies that the polynomials
on $D_1$ do not have any double root. Since the roots of the polynomials depend
continuously on its coefficients this implies that the polynomials of 
each connected component of $D_1$ have the same number of real roots. 
The non-vanishing of the discriminant on $D_0$, where the highest coefficient
vanishes, implies that the second highest coefficient is non-zero and
that no double root occurs as well. This fixes the number of real roots on each
connected component of $D_0$ as well, and there is exactly one less which
went off to infinity. 
\end{proof}

\begin{cor}
 The discriminant $\Delta(P,t)$ of P is non-vanishing for all $\xi^2 > 0$ with $X>0$ 
and $0 \leq \alpha_1 \leq 1$,
and in the case that $\xi^2=0$ for all $0 <  X \neq 1$ 
and $0 \leq \alpha_1 \leq 1$.
\end{cor}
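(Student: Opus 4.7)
The plan is to reduce the non-vanishing of $\Delta(P,t)$ to the strict positivity of a single explicit polynomial in the three reduced parameters $(\xi^2,X,\alpha_1)$, and then to certify this positivity with computer algebra combined with a boundary analysis in $\alpha_1$. As a first step I would substitute the normalizations $\xi_2=0$, $\eta_2=1$, $\xi_1=\xi$, $\eta_1=\xi^2+X$ into $Q_r$, $L=Q_1'Q_2-Q_2'Q_1$, and $Q_s=\alpha_1 Q_1+\alpha_2 Q_2$, and hence into $P=3L'(Q_1Q_2Q_s)-(Q_1Q_2Q_s)'L$. This yields the promised sixth-degree polynomial in $t$ whose coefficients lie in $\mathbb{R}[\xi,X,\alpha_1]$; the symmetry $\xi\leftrightarrow-\xi$, $t\leftrightarrow-t$ of $P$ forces $\Delta(P,t)$ to depend on $\xi$ only through $\xi^2$.

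Next I would compute $\Delta(P,t)$ as $\mathrm{Res}_t(P,P')$ divided by the leading coefficient of $P$ in $t$, a routine $11\times 11$ Sylvester determinant for a degree-six polynomial. The output is an explicit, if voluminous, element of $\mathbb{R}[\xi^2,X,\alpha_1]$, which I would pass to a computer algebra system for factoring. I expect a factorization of the shape $\Delta = c\cdot X^{a}\cdot\xi^{2b}\cdot(X-1)^{2c}\cdot R(\xi^2,X,\alpha_1)$, in which the $X^{a}$ factor is strictly positive on the domain, the $\xi^{2b}(X-1)^{2c}$ factor accounts exactly for the exceptional slice $(\xi^2,X)=(0,1)$ singled out in the corollary, and possibly further squared factors in $\alpha_1-\tfrac12$ are pulled out by the $1\leftrightarrow 2$ exchange symmetry.

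The main obstacle is showing that the residual factor $R$ is strictly positive on $\{X>0,\ \xi^2\ge 0,\ 0\le\alpha_1\le 1\}$. Since $\alpha_1$ ranges over a compact interval, I would treat $R$ as a low-degree polynomial in $\alpha_1$ and combine (i) positivity of the endpoint polynomials $R|_{\alpha_1=0}$ and $R|_{\alpha_1=1}$, which correspond to one-state problems and should admit a sum-of-squares representation in $\xi^2$ and $X-1$, with (ii) a sign analysis of the $\alpha_1$-discriminant of $R$ that rules out any interior real root. Identifying a factorization for which $R$ admits such a sum-of-squares certificate is where the real work lies; if no clean decomposition exists, the fallback is a cylindrical algebraic decomposition in $(\xi^2,X,\alpha_1)$. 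The fact that the corollary separates the $\xi^2>0$ and $\xi^2=0$ cases is a strong hint that on the slice $\xi=0$ the factorization degenerates precisely at $X=1$ in a way that must be read off from $\Delta$ itself rather than absorbed into the positivity step.
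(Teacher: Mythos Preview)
Your overall plan---compute $\Delta(P,t)$ explicitly, factor it in a computer algebra system, and then certify non-vanishing of the residual factor---is exactly the paper's route. Two details differ from your expectations, though. First, the exceptional locus $(\xi,X)=(0,1)$ is \emph{not} cut out by separate factors $\xi^{2b}(X-1)^{2c}$; the factorization the paper obtains is
\[
\Delta(P,t)=589824\,X\,\bigl[(1-X-\xi^2)^2+4\xi^2\bigr]^{7}\bigl\{-[\alpha_1(\alpha_2\xi^2+1)+\alpha_2 X]\bigr\}\,P_2(\alpha_1,X,\xi^2)^{2},
\]
so the degeneration at $\xi=0$, $X=1$ comes from a single irreducible sum-of-squares factor, and the linear factor has fixed (negative) sign on the whole domain. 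Second, for the residual $P_2$ the paper avoids your $\alpha_1$-endpoint/discriminant scheme entirely: it expands $P_2=\sum_{k=0}^{4}Y_k(\alpha_1,\xi^2)X^k$ and shows by short elementary inequalities that every coefficient $Y_k$ is non-negative with at least one strictly positive, so $P_2>0$ for $X>0$. Your approach would work in principle, but the $X$-expansion gives much shorter certificates and sidesteps any further resultant or CAD computation.
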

\begin{proof}
The discriminant of $P$ is given by
\begin{eqnarray*}
\Delta(P,t)&=& 589824 X \left[(1-X-\xi^2)^2 + 4 \xi^2 \right] ^7 \\
&&\times\left\{- \left[\alpha_1 (\alpha_2 \xi^2   +1)+ \alpha_2 X\right] \right\}
P_2 (\alpha_1, X,\xi^2)^2
\end{eqnarray*}
All terms except the last are obviously nonzero, so we take a closer look at the last term, which is a fourth order
polynomial in $X$, 
\begin{eqnarray}
P_2(\alpha_1,X,\xi^2) =\sum\limits_{k=0}^{4} Y_k(\alpha_1,\xi^2) X^k
\end{eqnarray}
We are now going to show that each of the coefficients is non-negative and at least one of them is non-vanishing,
giving us a positive polynomial.
The coefficient
\begin{equation}
Y_4 =\alpha_2^2 (16 \,\alpha_2 + \alpha_1^2) 
\end{equation}
is obviously zero for $\alpha_1=1$, since $\alpha_2=1-\alpha_1$,
 and positive otherwise.
The coefficient
\begin{eqnarray*}
 Y_3 &=&  -4 (\alpha_2)^2 (3 \alpha_1^2 + 4\alpha_1 -8) \xi^2 \\
&&+ 4 \, \alpha_2 \, \left(-3\alpha_1^3
 + 67 \alpha_1^2 -196 \alpha_1   +136 \right)
\end{eqnarray*}
is affine in $\xi^2$. To show that this coefficient is greater than zero, we
use that
\begin{gather*}
 -3\alpha_1^2 - 4\alpha_1 +8 \geq - 3 \cdot 1^2 - 4 + 8 = 1
\end{gather*}
and
\begin{gather*}
 -3 \alpha_1^3 + 67 \alpha_1^2 - 196 \alpha_1 + 136 > -3 + 66 \alpha_1^2  - 198 \alpha_1 + 136 \\
= 66 ( \alpha_1 ^2 - 3 \alpha_1 +2 ) +1=  66 (2- \alpha_1)(1-\alpha_1) \ge 1
\end{gather*}
we have 
\begin{gather*}
 Y_3 \ge 4 \, \alpha_2 \,(\alpha_2 \xi^2 + 1) \ge 0
\end{gather*}
with $Y_3=0$ only if $\alpha=1$.
The remaining coefficient can be treated in a similar fashion so we will just state them for completeness.

{\bf Coefficient $Y_2$:} $Y_2$ is a quadratic polynomial in $\xi^2$ :
\begin{eqnarray*}
Y_2 &= & 2 (-13 \alpha_1^4 +34 \alpha_1^3 - 21 \alpha_1^2 - 8 \alpha_1 +8) \xi^4 \\ 
& - &
2 (122 \alpha_1 ^4 - 636 \alpha_1^3 + 914 \alpha_1^2 -384 \alpha_1 -16) \xi^2 \\
& - & 2 \left(13 \alpha_1^4 - 26 \alpha_1^3 + 405 \alpha_1^2 - 392 \alpha_1 -8\right)
\end{eqnarray*}
in which all terms can be shown to be non-negative.
 
{\bf Coefficient $Y_0$:} The last term is given by
\begin{eqnarray*}
&&Y_0=  \alpha_1^2 (1+\xi^2)^2  \\
&&\times \left[ \xi^4 ( \alpha_2)^2 + \xi^2 \, 2(1-\alpha_1^2 + 6\, \alpha_1 \alpha_2)
+1+\alpha_1^2 + 14 \alpha_1\right]\\
 &&\ge 0 
\end{eqnarray*}
This shows that the discriminant is non-zero.
\end{proof}
To determine the number of roots we need to look at one polynomial for 
conveniently chosen parameters. Choose
\begin{equation*}
\xi=2, \;\; X=1, \;\; \alpha=\frac{1}{2}
\end{equation*}
giving us for (\ref{P}) 
\begin{equation*}
P= -64 (1 + t) (7 + 8 t + 8 t^2 + 4 t^3 + t^4)
\end{equation*}
which has exactly one real root. This completes the proof for mixed states. 
In the case that one of the states is pure, i.e. $\xi^2=\eta$, the orthogonal measurement conjecture follows from continuity in all variables of the mutual information.  
To show the stronger statement of Theorem 2 one has to  perform
a similar analysis as the one done in this paper,
with the difference, that the second derivative 
(\ref{secondd}) has a maximum of three roots and one simple pole.

In this paper we proved the orthogonal measurement conjecture for general 
two qubit states. It is natural to ask if this proof can be extended to higher
dimensions. Already for qutrits we are facing serious difficulties. Two
qutrits can in general not be transformed to real matrices with only one
unitary transformation. Restricting ourselves to real qutrit or qunit matrices
we can derive equations of similar type as equations (\ref{keyeq}). Performing
a similar expansions of the vectors in a basis as in (\ref{basis}) does
not give us, as in this paper, functions of one real variable but gives 
rise to intersections of transcendental curves in 
projective space, leading us into a vast, unexplored territory.

I want to thank Berge Englert, Lai Choy Heng, Frederick Willeboordse,
Jun Suzuki and Alexander Shapeev for valuable discussions and for
 making this work possible.
This work is supported by the National Research Foundation \& Ministry of Education, Singapore.

\bibliography{orthogonal}{}
\bibliographystyle{apsrev}

\end{document}